\newtheorem{thm}{Theorem}[section]
\newtheorem{lem}{Lemma}[section]
\newtheorem{rem}{Remark}[section]
\newtheorem{cor}{Corollary}[section]
\newtheorem{df}{Definition}[section]
\newcommand{\beqo}{\begin{eqnarray*}}
\newcommand{\eeqo}{\end{eqnarray*}\noindent}
\newcommand{\beq}{\begin{eqnarray}}
\newcommand{\eeq}{\end{eqnarray}\noindent}
\newcommand{\R}{\mathbb{R}}
\newcommand{\Eop}{\mathbb{E}}
\newcommand{\Pop}{\mathbb{P}}
\newcommand{\Qop}{\mathbb{Q}}
\newcommand{\F}{\mathbb{F}}
\newcommand{\FC}{\mathcal{F}}
\numberwithin{equation}{section}
\begin{document}

\title{\textbf{Optimal investment under partial  information and robust VaR-type constraint }}

\author{ \textbf{Nicole B\"auerle}$^{\S}$,\ \textbf{An Chen}$^*$\\
\
\\
\\
{\small $^{\S}$\emph{Karlsruhe Institute of Technology, Department of Mathematics}}
\\
{\small \emph{ D-76128 Karlsruhe, Germany}}
\\
{\small \emph{nicole.baeuerle@kit.edu}}
\\
{\small \textit{$^*$University of Ulm, Institute of Insurance Science}}
\\
{\small \emph{Helmholtzstrasse 20, D-89069 Ulm, Germany}}
\\
{\small \emph{an.chen@uni-ulm.de}}
}

\date{\today}

\maketitle{}

\begin{abstract}


\noindent This paper extends the utility maximization literature by combining partial information and (robust) regulatory constraints. Partial information is characterized by the fact that the stock price itself is observable by the optimizing financial institution, but the outcome of the market price of the risk $\theta$ is unknown to the institution.
The regulator develops either a congruent or distinct perception of the market price of risk in comparison to the financial institution when imposing the Value-at-Risk (VaR) constraint. We also discuss a robust VaR constraint in which the regulator uses a worst-case measure.
The solution to our optimization problem takes the same form as in the full information case: optimal wealth can be expressed as a decreasing function of state price density. The optimal wealth is equal to the minimum regulatory financing requirement in the intermediate economic states.  
The key distinction lies in the fact that the price density in the final state depends on the overall evolution of the estimated market price of risk, denoted as $\hat{\theta}(s)$ or that the upper boundary of the intermediate region exhibits stochastic behavior.
\\

\noindent JEL: C6, G1, D9.\\

\noindent Keywords: Uncertainty about drift, Value-at-Risk-based regulation, Risk Management 

\end{abstract}

\section{Introduction}

Dating back to \cite{merton1969, merton1971},  utility maximization and optimal asset allocation problems have become one of the most studied optimization problems in economics, finance and insurance. Merton has made two crucial assumptions to achieve analytical solutions: First, the risk preferences of the optimizing agent are characterized by hyperbolic absolute risk aversion (HARA), such as power and exponential utility.\footnote{Analytical solutions can still be achieved if the HARA utility is extended to so-called symmetric asymptotic HARA utility functions, which incorporate exponential and power utility as limiting cases (see e.g. \cite{chen2011}.}  The second assumption concerns the underlying risky investment which generates normally or log-normally distributed risky payoffs. There is abundant academic research extending Merton's pioneering works. Here, we mention three streams of extensions which are relevant for our study: a) Risk-based capital requirements have been incorporated to Merton's pioneering works in extensive follow-up scholarly research (\cite{basak2001value}, \cite{sass2010},  \cite{chen2018b}). b) 
When modeling the dynamics of the underlying assets, researchers have grappled with the challenge of the unobservability of the drift term of the assets (or the market price of risk), despite the ability to observe asset prices (see e.g. \cite{gennotte1986optimal}, \cite{karatzas2001bayesian}, \cite{honda2003optimal}, \cite{sass2004optimizing}, \cite{rieder2005portfolio}, \cite{brendle2006portfolio}, \cite{bjork2010optimal}, \cite{lindensjo2016optimal}). Note that \cite{sass2010} already combine partial information and capital requirements, but use an expected loss criterion, an HMM market and no ambiguity.  c) In case of model ambiguity, robust approaches have been pursued which take into account a number of different models.
Following the approach with multiple priors of \cite{gilboa2004maxmin}, \cite{schmeidler1989subjective}
there have been numerous suggestions to consider robust utility maximization problems
(\cite{quenez2004optimal}, \cite{schied2005optimal}, \cite{ismail2019robust}, \cite{sass2021robust}, \cite{bauerle2021distributionally}).
As the risk constraint, the incomplete information and robustness in the financial market are of vital importance when considering the asset allocation problem of a financial institution, the present paper combines these three aspects and studies their combined impact on the asset allocation. \\

\noindent In the present paper, we analyze the optimal asset allocation problem of a financial institution with partial information subject to various Value-at-Risk (VaR)-type regulations. In practice, VaR has been frequently used as a regulatory and internal risk management tool in the financial industry to compute regulatory or economic capital. VaR can be considered as the cornerstone of the Basel III and Solvency II capital requirements and is widely used. Further, we incorporate parameter uncertainty in the optimization problem of the financial institution. We assume that the institution can observe the asset prices on the market, but cannot observe the drift term of the assets (or the market price of risk). This assumption is realistic since the volatility of the stock prices can be estimated very well, whereas the drift is notoriously difficult to estimate (see e.g. \cite{gennotte1986optimal}).  Due to the observed partial information, both the financial institution and regulator shall build beliefs about the market price of risk. \\

\noindent In \emph{one} setting, we consider the case in which the regulator 
has the same information about the underlying asset as the optimizing financial institution. 
In other words, we  assume that the regulator is  perfectly certain about the
objective probability law of the state process and this belief is identical to the one used by the optimizing financial institution. 
However, this assumption does not always hold, in particular when the firm itself only owns partial information about the asset dynamic and needs to build own beliefs about the market price of risk.  Generally, the nullification of this assumption can be justified by the fact that the financial institution has some firm-specific information and the regulator is only able to use information external to the
firm, for instance, some industry-wise measures.\footnote{\cite{chen2009knightian} discuss insurance regulation problems where the regulator may not have perfect confidence
on the perceived probability measure for insurance companies’ future asset
value.} Hence, in our \emph{second} model setup, we look at the case that regulators, as outsiders, may have a different perceived probability to determine the Value-at-Risk. Alternatively, regulators may want to be on the safe side and use a robust approach.\\

\noindent There are numerous empirical studies documenting the impact of general economic policy uncertainty (EPU) on asset allocation and asset prices (\cite{julio2012political}; \cite{kelly2016price}).  Our paper enriches this stream of literature by  theoretically studying the impact of regulatory uncertainty on the asset allocation. It can be considered as direct extensions to \cite{basak2001value} (for the setting that the regulator and financial institution know the market price of risk) and \cite{gabih2009utility} (for the setting that the regulator and financial institution 
build various beliefs about the market price of risk). \\

\noindent The optimal terminal wealth under partial information and VaR-type regulatory constraint turns out to take the same form as in the full information case (c.f. \cite{basak2001value} and \cite{gabih2009utility}):  
It displays a 3-region solution. In good and bad economy (with small or large state price density realizations), the solution is a Merton-type form, while in the intermediate economic states, the solution corresponds to the minimum regulatory financing requirement. 
Furthermore, the solution suggests lower optimal terminal wealth levels than the unconstrained solution with no VaR in rather favorable and unfavorable economic state, while the optimal terminal wealth outperforms the unconstrained solution in the intermediate economic state.  
Despite the same form, the optimal solution under partial information is more complex than the full information case, as the terminal state price $\xi_T$ depends on the entire evolution of the estimated market price of risk $\hat{\theta}(s)$ between $[0,T]$, i.e., on the entire path of the stock price. In contrast, in the full information case (solution in \cite{basak2001value}), the optimal terminal wealth is exclusively a function of the terminal state price density (or the terminal stock price).  \\

\noindent  In a more realistic environment, the regulator will form a different prior for the market price of the risk than the financial institution, resulting in a VaR constraint under a different probability measure. The regulator may also use a worst-case measure to impose more stringent capital requirements. 
This difference in the regulatory constraint does not change the overall structure of the optimal terminal wealth. In adverse economic situations, default cannot be avoided with certainty, i.e., the financial institution ends up with terminal assets that are smaller than the minimum regulatory funding requirement. In addition, the optimal terminal wealth has an intermediate range in which the regulatory minimum funding requirement is met. In contrast to the scenario where the regulator shares identical priors regarding the market price of risk, the upper boundary of the intermediate region now exhibits stochastic behavior.\\



\noindent The remainder of the paper is organized as follows. Section \ref{sec:market} describes the underlying financial market. 
Sections \ref{sec:opt_VaR&PI} and \ref{sec:PI&VaR_robust} 
study and solve the optimal terminal wealth problem for partial information under various VaR-type regulations: Section \ref{sec:opt_VaR&PI} deals with a situation where all have the same belief and a VaR constraint as in \cite{basak2001value}. We determine the optimal terminal wealth and  the optimal investment strategy. Section  \ref{sec:PI&VaR_robust} studies a robust VaR constraint.
A series of numerical analyses is conducted in Section \ref{sec_num_ana}. Finally, Section \ref{sec:con} provides some concluding remarks and perspectives for further research, and several proofs are collected in Section \ref{sec:app}. 

\section{Underlying financial market }\label{sec:market}
We assume that we are in a financial market with two traded assets: one risky and one risk-free asset. Suppose that $(\Omega, \FC, \F=\{\FC_t, 0\le t  \le T\}, \Pop)$ is a filtered probability space and $T$ is the final time horizon. The risk free asset $B=(B(t))$ evolves according to
\[
d B(t) \ = \ r B(t) dt, \quad  t\in[0,T]
\]
for a constant risk free rate $r\ge 0$. The asset price dynamics for the risky
asset $S=(S(t))$ is given by
\begin{align}\label{stockprice} d S(t) =& S(t) \left( (r + \theta \sigma )dt + \sigma d W(t)\right),  \quad   t\in[0,T],
\end{align}
where $W=(W(t))$ denotes a standard Brownian motion under the probability
measure $\mathbb{P}$ and $\sigma>0$ the volatility of the stock.
The stock price itself is observable to the investor, but the outcome of the market price of risk $\theta$ is unknown to her. It is a random variable with known initial distribution $\Pop(\theta=\vartheta_k) =: p_k>0,\;
k=1,\ldots ,m$ where $\vartheta_1,\ldots ,\vartheta_m$ are the possible values of $\theta$. 
This distribution can be viewed as a prior belief that the investor has about the distribution of the market price of the risk at time $t=0$.
The distribution of  $\theta$ will be updated by the investor based on her information. We assume that $\theta$ and $W$ are independent. Note that observing the stock price is equivalent to observing the process $Y$ with $Y(t):=W(t) + \theta t $, as we can rewrite the stock price as follows:
\begin{align}\label{stockprice1} d S(t) =&  S(t) \left(r dt + \sigma  d Y(t)\right).
\end{align}
In the considered Bayesian model, the investor draws her inferences about $\theta$ and updates her belief about $\theta$ via $\hat{\theta}(t) = \Eop[\theta  | \FC^S_t] $ where  $\F^S=\{\FC^S_t, 0\le t   \le T\}$ is the augmented filtration generated by the stock price processes $S$. This conditional expectation can be computed as follows. First, let us consider $\Pop(\theta=\vartheta_k  | \FC^S_t)$. Standard filtering theory implies that $ \Pop(\theta=\vartheta_k  | \FC^S_t)=\Pop(\theta=\vartheta_k  | Y(t))$.\footnote{The process $p_k(t) := \Pop(\theta=\vartheta_k  | \FC^S_t)$ is called Wonham-filter and an SDE can be derived for it, see e.g. \cite{elliott2008hidden}, \cite{karatzas2001bayesian}.} Using Bayes' rule and the fact that the density of $W(t)$ is given by $\varphi_t(x) = (2\pi t)^{-1/2} e^{- x^2/(2t)}$, we obtain that
\begin{align}
\nonumber \Pop(\theta=\vartheta_k  | Y(t)= y)=& \frac{\Pop(\theta=\vartheta_k, Y(t)= y)}{\sum_{i=1}^m\Pop(\theta=\vartheta_i ,Y(t)= y)} = \frac{\Pop(\theta=\vartheta_k, W(t)= y-\vartheta_k t)}{\sum_{i=1}^m\Pop(\theta=\vartheta_i ,W(t)= y-\vartheta_it)} \\
=&  \frac{p_k \varphi_t(y-\vartheta_kt)}{\sum_{i=1}^m p_i \varphi_t(y-\vartheta_i t)},
\end{align}
where $\Pop(X=x)$ should be understood as the density of $X$. Inserting now the density and eliminating all common factors finally yields the expression
$$\Pop(\theta=\vartheta_k | Y(t))= \frac{p_k L_t(\vartheta_k,Y(t))}{F(t,Y(t))},$$
where
$$ F(t,y) := \sum_{k=1}^m L_t(\vartheta_k,y) p_k, \quad L_t(\vartheta_k,y) := \exp(\vartheta_k y -\frac12  \vartheta_k^2 t).$$
Hence, the conditional expectation is given by  (here and later we denote by $F_y$ the derivative of $F$ w.r.t. $y$)
\begin{align*}\hat{\theta}(t) := \Eop[\theta  | \FC^S_t] = \sum\limits_{k=1}^m \vartheta_k \Pop(\theta=\vartheta_k | Y(t)) = \frac{\sum\limits_{k=1}^m\vartheta_k p_k L_t(\vartheta_k,Y(t)) }{F(t,Y(t))}=\frac{F_y(t,Y(t))}{F(t,Y(t))}.
\end{align*}
Based on her own Bayesian updating $\hat{\theta}$, we can rewrite the stock price evolution as
\begin{align}\label{stockpriceOBS} d S(t) =& S(t) \left((r+ \sigma \hat{\theta}(t))dt + \sigma d \hat{W}(t)\right), \end{align}
where
\begin{align}
\label{relationBM}   d \hat{W}(t) =&  d W(t) + (\theta-\hat{\theta}(t)) dt, \quad \text{or} \quad d \hat{W}(t) = d Y(t) -\hat{\theta}(t) dt
\end{align}
and thus we ensure that the stock price process expressed in \eqref{stockpriceOBS} agrees with the one in \eqref{stockprice}. Note that L\'evy's characterization of Brownian motion and filtering theory  implies that $\hat{W}=(\hat{W}(t))$ is an $(\F^S,\Pop)$-Brownian motion (see \cite{karatzas2001bayesian}, sec.3 and the references given there). In this sense, the stock price is expressed with all processes being $\F^S$-adapted. It is the key result used later in our optimization problem. Also note that the financial market is here complete.

\section{Optimal solution under partial information and VaR constraints} \label{sec:opt_VaR&PI}
Let $\pi(t)$ denote the amount that the institution invests in the risky asset and $(X(t)-\pi(t))$ the amount invested in the risk-free bank account. We assume that the investment strategy can only be chosen from the following admissible set
\begin{align}
\nonumber \mathcal{A}(x_0) :=&  \Big\{\pi=(\pi(t))\; |\; X_0=x_0,  \; \pi(t) \; \text{is}\; \F^S -\text{progressively measurable},\\
 &   X(t) \geq 0 \; \text{for all} \; t\geq 0, \;\int_0^{  T} \pi^2(t) d t < \infty\; \Pop-a.s.\Big\},
\end{align}
where $x_0$ is given.
Based on these assumptions, the wealth process $X=(X(t))$  of the institution satisfies the SDE
\begin{align} \label{wealthprocess1}
\nonumber  d X(t) = &  \pi(t)\frac{dS(t)}{S(t)} +(X(t)-\pi(t))\frac{dB(t)}{B(t)}  \\
\nonumber  =&\pi(t)\big((r+ \sigma \hat{\theta}(t)) dt+\sigma d \hat{W}(t)\big)+(X(t)-\pi(t)) rdt \\
\nonumber 
=& \big( r X(t) +\pi(t)\sigma \hat \theta(t)\big) dt + \pi(t) \sigma d \hat W(t)\\
 =& r X(t) d t + \pi(t)\sigma d Y(t)  \\ \nonumber 
 X(0)=& x_0.
 \end{align}
The second equality is obtained by inserting the asset price dynamics and the third by rearranging the terms $dt$ and $d\hat W(t)$. In the last step, we use the definition of $Y.$
 In order to stress that strategy $\pi$ is used here, we also write $(X^\pi(t)).$\\

\noindent We assume that our investor solves the following optimization problem
\begin{align} \label{objective}
& \sup_{ \pi \in \mathcal{A}(x_0)  }\;
{\mathbb{E} }\left[U(X^\pi(T))  \right], \;  \;\text{s.t.}\;   \eqref{wealthprocess1} \; \;\text{and}\; \;\Pop (X^\pi(T) \ge L ) \geq 1-\beta,  \;
\end{align}
where  $L>0$ and $\beta\in [0,1]$. The constraint is equivalent to  the VaR constraint  $VaR_\beta \le x_0-L$ where $VaR_\beta$ is defined as
$$ \Pop(x_0-X^\pi(T) \le VaR_\beta )=1-\beta,$$
which is the loss which is exceeded with some probability $\beta.$  $L$ typically represents a regulatory (minimum) funding requirement. It  is usually  the initial investment of the debt-holders of an institutional investor $L_0<x_0$, accrued with a minimum interest $L=L_0 e^{g T}$ with $g<r$. See e.g.\ \cite{broeders2010pension} for further discussions about regulatory thresholds. The investor's utility function $U$ is assumed to be defined on the positive real line,
 is twice differentiable and satisfies the usual Inada conditions: $\lim\limits_{x\downarrow 0 } U^{\prime}(x)=\infty$ and $\lim\limits_{x\uparrow \infty } U^{\prime}(x)=0$. 
Note that the case $\beta=1$ corresponds to the classical terminal utility problem without constraints. The case $\beta=0$ corresponds to portfolio insurance where a minimal capital $L$ is guaranteed.\\

\noindent We aim to find the optimal self-financing investment strategies  to maximize the expected utility from the terminal wealth, given a VaR-type risk constraint. The key to solve the optimization problem under partial information is to express all processes such that they are $\F^S$-adapted and that the market is complete. Thus, the problem can be turned into one with complete information.
Note that, due to the relation
$Y(t) := \hat{W}(t)+\int_0^t\hat{\theta}(s)ds$, we can define a new probability measure $\Qop$, under which $(Y(t))$ becomes a standard $(\F^S,\Qop)$ Brownian motion:
\begin{align}\label{eq:QP}
\frac{d \Qop}{d \Pop} = \exp\left(-\int_0^T \hat{\theta}(s) d \hat{W}(s) -\frac12 \int_0^T \hat{\theta}^2 (s)ds \right) := \nu_T.
\end{align}
Hereby, we have used the fact that $\hat{W}=(\hat{W}(t))$ is a standard Brownian motion under $(\F^S,\Pop)$.\footnote{We do not use  $ W(t)+\theta t$ to define the change of measure, as $\theta$ is not observable. } A proof of the next lemma can be found in the appendix of \cite{bauerle2019optimal}.
\begin{lem}\label{lem:nu}
We can express $\nu_t:= \Eop[\nu_T| \mathcal{F}_t^S]$ as a function of the observational magnitude $Y(t)$:
\begin{align}
\nu_t = F(t,Y(t))^{-1},\quad   t\in[0,T].
\end{align}
\end{lem}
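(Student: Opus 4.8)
\noindent The plan is to first identify $\nu(t)$ with a Dol\'eans--Dade exponential and then evaluate that exponential by applying It\^o's formula to $\log F(t,Y(t))$; the only inputs beyond standard stochastic calculus are the boundedness of $\hat\theta$ and the two identities already recorded in Section~\ref{sec:market}.

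\emph{Step 1 (martingale property).} I would first note that $\nu(T)$ in \eqref{eq:QP} is the stochastic exponential $\mathcal{E}\big(-\int_0^\cdot\hat\theta(s)\,d\hat W(s)\big)_T$. Since $\hat\theta(t)=\sum_{k=1}^m\vartheta_k\,\Pop(\theta=\vartheta_k\mid Y(t))$ is a convex combination of the finitely many numbers $\vartheta_1,\dots,\vartheta_m$, it is bounded by $c:=\max_k|\vartheta_k|$, so Novikov's condition holds and $t\mapsto\mathcal{E}\big(-\int_0^\cdot\hat\theta\,d\hat W\big)_t$ is a true $(\F^S,\Pop)$-martingale (recall from Section~\ref{sec:market} that $\hat W$ is an $(\F^S,\Pop)$-Brownian motion). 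Hence, for $t\in[0,T]$,
\[
\nu(t)=\Eop[\nu(T)\mid\mathcal{F}^S_t]=\exp\!\Big(-\!\int_0^t\hat\theta(s)\,d\hat W(s)-\tfrac12\!\int_0^t\hat\theta^2(s)\,ds\Big),
\]
and using $d\hat W(s)=dY(s)-\hat\theta(s)\,ds$ from \eqref{relationBM} this becomes $\log\nu(t)=-\int_0^t\hat\theta(s)\,dY(s)+\tfrac12\int_0^t\hat\theta^2(s)\,ds$.

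\emph{Step 2 (It\^o on $\log F$).} Next I would exploit two elementary facts about $F(t,y)=\sum_k p_k L_t(\vartheta_k,y)$: each term $L_t(\vartheta_k,y)=\exp(\vartheta_k y-\tfrac12\vartheta_k^2 t)$ solves the backward heat equation $\partial_t L+\tfrac12\partial_{yy}L=0$, hence so does $F$, i.e. $F_t+\tfrac12 F_{yy}\equiv 0$; and $F_y(t,Y(t))/F(t,Y(t))=\hat\theta(t)$, which was derived in Section~\ref{sec:market}. As a finite sum of strictly positive smooth functions $F$ is $C^{1,2}$ and nowhere zero, so $\log F$ is admissible for It\^o's formula, and $F(0,Y(0))=F(0,0)=\sum_k p_k=1$. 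Since $d\langle Y\rangle_s=ds$ by \eqref{relationBM}, It\^o gives
\[
d\log F(t,Y(t))=\Big(\tfrac{F_t+\tfrac12 F_{yy}}{F}-\tfrac12\tfrac{F_y^2}{F^2}\Big)\,dt+\tfrac{F_y}{F}\,dY(t)=-\tfrac12\hat\theta^2(t)\,dt+\hat\theta(t)\,dY(t),
\]
so $\log F(t,Y(t))=\int_0^t\hat\theta(s)\,dY(s)-\tfrac12\int_0^t\hat\theta^2(s)\,ds$.

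\emph{Step 3 (conclusion).} Comparing the expressions from Steps 1 and 2 gives $\log\nu(t)=-\log F(t,Y(t))$, i.e. $\nu(t)=F(t,Y(t))^{-1}$. I do not expect a genuine obstacle here: the one place that requires attention is the justification of the martingale property in Step 1, which is precisely where boundedness of $\hat\theta$ is used; otherwise one only has to keep the It\^o bookkeeping straight. An alternative route, avoiding Step 1, is to check by It\^o that $F(\cdot,Y(\cdot))^{-1}$ solves $dZ_t=-Z_t\hat\theta(t)\,d\hat W(t)$ with $Z_0=1$ — the same linear SDE that $\nu$ satisfies — and invoke uniqueness; yet another is to apply the Kallianpur--Striebel formula with reference density $d\Pop/d\Pop^*=L_t(\theta,Y(t))$ and use that $\theta$ is independent of $\FC^S_t$ under $\Pop^*$.
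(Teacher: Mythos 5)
Your argument is correct. Note that the paper does not prove Lemma \ref{lem:nu} itself but refers to the appendix of \cite{bauerle2019optimal}; your proof is a valid self-contained substitute. The three ingredients are each properly justified: boundedness of $\hat\theta$ (a convex combination of $\vartheta_1,\dots,\vartheta_m$) gives Novikov, so the stochastic exponential is a true $(\F^S,\Pop)$-martingale and $\nu(t)$ equals its time-$t$ value; $F$ solves the backward heat equation $F_t+\tfrac12 F_{yy}=0$ termwise and satisfies $F_y/F=\hat\theta$ along $Y$, both immediate from Section \ref{sec:market}; and the It\^o computation for $\log F(t,Y(t))$ with $d\langle Y\rangle_t=dt$ and $F(0,0)=1$ is bookkept correctly, so the two log-expressions are negatives of each other. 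This direct verification is arguably more elementary than the filtering/Kallianpur--Striebel route you mention as an alternative (which is closer in spirit to how such identities are usually obtained in the cited literature), at the price of having to guess the closed form $F(t,Y(t))^{-1}$ in advance; either way the conclusion $\nu(t)=F(t,Y(t))^{-1}$ holds for $0\le t\le T$, which is the range actually used in the paper.
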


\noindent The discounted process  $\xi_t := e^{-r t} \nu_t$  gives the state price density process in the complete market and satisfies
\begin{align*}
d \xi_t =  \xi_t (-r d t -\hat{\theta}(t) d \hat{W}(t)), \; \xi_0=1.
\end{align*}

\noindent The dynamic optimal asset allocation problem \eqref{objective} is now ready to be solved through the so-called static martingale approach (c.f. \cite{cox1989optimal}). By the  completeness of the market, Problem \eqref{objective} can be rewritten as
\begin{align} \label{eq:optproblem1}
\max_{X^{\pi}(T)} \, {\mathbb{E}} [U(X^{\pi}(T))] \quad \text{s.t.}\quad  \Pop (X^{\pi}(T) \ge  L )\ge 1- \beta \quad \text{and}\quad {\mathbb{E}} [\xi_T X^{\pi}(T)]\le x_0,
\end{align}
where we maximize over $\FC^S_T$ measurable variables.
We denote by $I$ the inverse of the first derivative of the utility function $U^{\prime}$.
We furthermore assume the following integrability condition:

\vspace{2mm}
{\em
\noindent{\it Assumption (A)}: For any $\lambda_1\in(0,\infty)$, we have
\begin{align*}
{\mathbb{E} }[U(I(\lambda_1\xi_T))]<\infty \quad \mbox{and} \quad  {\mathbb{E}} [\xi_T I(\lambda_1\xi_T)]<\infty.
\end{align*}
}

{{\raggedleft{Solving}}} problem \eqref{eq:optproblem1} we have to look at the following Lagrangian
\begin{align}\label{eq:Lag}
U(X^{\pi}(T)) + \lambda_1(x_0- \xi_T X^{\pi}(T) ) + \lambda_{2}(\beta -{\bf 1}_{X^{\pi}(T)<L} )
\end{align}
where $\lambda_1$, $\lambda_2$ are two multipliers for the budget and VaR-type risk constraint, and ${\bf 1}_{A}$ stands for the indicator function which gives 1 if $A$ occurs and $0$ else.
In order to make the above problem admissible, we shall assume that $x_0\geq {\mathbb{E}}[\xi_T L {\bf 1}_{\xi_T\le \bar{\xi}}]$, where $\bar{\xi}$ will be introduced below.
In a full information setting, \cite{basak2001value} solve the above problem. Thanks to the transformation to a complete market setting with same filtration, the optimal terminal wealth here takes the same form:
\begin{align} \label{optTW}
X^*_{T}(\lambda_1,\xi_T) =  I(\lambda_1 \xi_T){\bf 1}_{\xi_T<\underline{\xi}} +L {\bf 1}_{\underline{\xi} \leqslant \xi_{T}< \bar{\xi}} + I(\lambda_1  \xi_T) {\bf 1}_{\bar{\xi}\le \xi_T},
\end{align}
where $I$ is the inverse marginal utility. In order to emphasize the dependence on the Lagrangian multiplier and the state price, we have used the notation $X_T^*(\lambda_1,\xi_T)$. The upper bound $\bar{\xi}$ is defined such that $\Pop(\xi_{T}>\bar{\xi}) = \beta$.
The lower bound $\underline{\xi}$ is defined by $I(\lambda_1 \underline{\xi}) = L$ and $\lambda_1$ is chosen such that the budget constraint is satisfied.
The optimal terminal wealth under VaR-type constraint corresponds to the Merton solution from above (with a different Lagrangian multiplier $\lambda_1$) until the terminal wealth hits $L$ which corresponds to good market scenarios $\xi_T (\omega)<\underline{\xi}$ (``good'' as $\xi_T (\omega)$ is small). For bad scenarios where $\xi_T (\omega)> \bar{\xi}$, the optimal wealth again coincides with the Merton solution. For intermediate scenarios where $\underline{\xi} \leqslant \xi_{T} (\omega) < \bar{\xi}$, the terminal wealth is kept (hedged) at the level $L$ in order to satisfy the VaR constraint. As the budget constraint still needs to hold, the investor pays for this by having a lower wealth in the scenarios where $\xi_T (\omega) < \underline{\xi}$ and $\xi_T (\omega)\geq \bar{\xi}$. The last set of scenarios corresponds to the bad economic states where the value of the Arrow-Debreu securities is high and the stock price is low. Hence, an investor following a VaR optimal strategy will encounter higher losses in deep recessions than an investor in the Merton problem who is not regulated at all. In our setting we obtain a similar solution. We summarize our findings in the next theorem.

\begin{thm}\label{theo:VaR}
The optimal terminal wealth of problem \eqref{eq:optproblem1} is given by
\begin{align} \label{optTW2}
X^*_{T}(\lambda_1,\xi_T) =  I(\lambda_1 \xi_T){\bf 1}_{\xi_T<\underline{\xi}} +L {\bf 1}_{\underline{\xi} \leqslant \xi_{T}< \bar{\xi}} + I(\lambda_1  \xi_T) {\bf 1}_{\bar{\xi}\le \xi_T},
\end{align}
where $\underline{\xi}  = U'(L)/\lambda_1$, $\bar{\xi}$ is such that $\Pop(\xi_T > \bar \xi  )=\beta $ and $\lambda_1$ is the unique solution of ${\mathbb{E}} [\xi_T X_T^*]= x_0.$
\end{thm}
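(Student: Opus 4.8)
The plan is to solve \eqref{eq:optproblem1} as a static problem over $\FC^S_T$-measurable random variables. By the observational completeness recorded above (under $\Qop$ the process $Y$ is a Brownian motion, and $\xi_T=e^{-rT}F(T,Y(T))^{-1}$), every nonnegative $\FC^S_T$-measurable $X_T$ with $\Eop[\xi_T X_T]=x_0$ is the terminal value of some $\pi\in\mathcal A(x_0)$; hence it suffices to (i) identify the pointwise maximiser of the Lagrangian \eqref{eq:Lag} for suitable multipliers, (ii) calibrate the multipliers to the two constraints, and (iii) close the argument by Lagrangian duality. This mirrors \cite{basak2001value}, transported to the filtered market $(\F^S,\Pop)$.

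For step (i) I would fix $\lambda_1>0$, $\lambda_2\ge 0$ and, for each realisation $\xi:=\xi_T(\omega)$, maximise $x\mapsto U(x)-\lambda_1\xi x-\lambda_2\mathbf{1}_{\{x<L\}}$ over $x\ge 0$. On $[0,L)$ the maximiser of $U(x)-\lambda_1\xi x$ is $I(\lambda_1\xi)$, feasible iff $I(\lambda_1\xi)<L$, i.e. $\xi>\underline\xi:=U'(L)/\lambda_1$; on $[L,\infty)$ it is $I(\lambda_1\xi)$ if $\xi\le\underline\xi$ and $x=L$ otherwise. For $\xi>\underline\xi$, comparing the values at $x=I(\lambda_1\xi)$ and $x=L$ leads to
\[
h(\xi):=\big(U(I(\lambda_1\xi))-\lambda_1\xi I(\lambda_1\xi)\big)-\big(U(L)-\lambda_1\xi L\big),
\]
with $h(\underline\xi)=0$ and $h'(\xi)=\lambda_1\big(L-I(\lambda_1\xi)\big)>0$ for $\xi>\underline\xi$, so $h$ is continuous and strictly increasing. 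Hence there is a unique $\bar\xi>\underline\xi$ with $h(\bar\xi)=\lambda_2$; $x=L$ wins for $\underline\xi\le\xi<\bar\xi$ and $x=I(\lambda_1\xi)$ for $\xi\ge\bar\xi$. Collecting cases gives exactly \eqref{optTW2}, equivalently $X_T^*=\max(I(\lambda_1\xi_T),L)\mathbf{1}_{\{\xi_T<\bar\xi\}}+I(\lambda_1\xi_T)\mathbf{1}_{\{\xi_T\ge\bar\xi\}}$, so in particular $\{X_T^*<L\}=\{\xi_T>\bar\xi\}$.

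For step (ii), since $Y(T)$ is a finite Gaussian mixture, $\xi_T$ has an atomless law with support $(0,\infty)$, so $\Pop(\xi_T>\bar\xi)=\beta$ has a solution, and this $\bar\xi$ does not depend on $\lambda_1$. Fixing it, I would use the $\max$-representation to show $\lambda_1\mapsto\Eop[\xi_T X_T^*(\lambda_1)]$ is continuous (dominated via Assumption (A) and $\Eop[\xi_T]=e^{-rT}<\infty$) and strictly decreasing, with value $\to\infty$ as $\lambda_1\downarrow 0$ and $\to\Eop[\xi_T L\mathbf{1}_{\{\xi_T\le\bar\xi\}}]$ as $\lambda_1\uparrow\infty$; under the standing hypothesis $x_0\ge\Eop[\xi_T L\mathbf{1}_{\{\xi_T\le\bar\xi\}}]$ there is then a unique $\lambda_1$ with $\Eop[\xi_T X_T^*(\lambda_1)]=x_0$, and I would set $\lambda_2:=h(\bar\xi)\ge 0$ (so $\lambda_2=0$, $\bar\xi=\underline\xi$ in the unconstrained case $\beta=1$). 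Provided the constraint binds, i.e. $\underline\xi<\bar\xi$, these are the required multipliers; if the unconstrained Merton optimiser already meets the VaR bound the constraint is slack and $X_T^*$ reduces to it.

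Step (iii): for any feasible $X_T$ ($X_T\ge 0$, $\Eop[\xi_T X_T]\le x_0$, $\Pop(X_T<L)\le\beta$),
\begin{align*}
\Eop[U(X_T)] &\le \Eop[U(X_T)]+\lambda_1(x_0-\Eop[\xi_T X_T])+\lambda_2(\beta-\Pop(X_T<L))\\
&\le \Eop\big[U(X_T^*)-\lambda_1\xi_T X_T^*-\lambda_2\mathbf{1}_{\{X_T^*<L\}}\big]+\lambda_1 x_0+\lambda_2\beta=\Eop[U(X_T^*)],
\end{align*}
using feasibility and $\lambda_1,\lambda_2\ge 0$, then pointwise optimality, then $\Eop[\xi_T X_T^*]=x_0$ and $\Pop(X_T^*<L)=\Pop(\xi_T>\bar\xi)=\beta$. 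As $X_T^*$ is itself feasible and, by completeness, attained by some $\pi\in\mathcal A(x_0)$, it is optimal. I expect the calibration in step (ii) to be the main obstacle: one must pin down the two limits and the strict monotonicity of $\lambda_1\mapsto\Eop[\xi_T X_T^*(\lambda_1)]$, check that Assumption (A) yields a dominating function locally uniformly in $\lambda_1$, and dispose of the null ties $\xi_T\in\{\underline\xi,\bar\xi\}$; steps (i) and (iii) are then routine.
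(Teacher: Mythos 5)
Your proposal is correct and follows essentially the same route as the paper's own argument (given there for the more general Theorem \ref{theo:VaR2} with $\eta_T\equiv 1$ as the special case): pointwise maximisation of the Lagrangian with the case split at $\underline{\xi}$ and $\bar{\xi}$, the multiplier $\lambda_2$ fixed from $\bar\xi$ via the same increasing function (your $h$ is the paper's $f$ shifted by $U(L)$), and the final verification inequality using feasibility and the binding constraints. The extra material you sketch on attainability and on existence/uniqueness of $\lambda_1$ goes slightly beyond what the paper writes out, but the core proof is the same.
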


\begin{rem}
Note that $\Pop(\xi_T > \bar \xi  )=\beta $ can more explicitly be written as
\begin{eqnarray*}
\Pop(\xi_T > \bar \xi  )&=& \Pop((e^{rt} F(T,W(T)+\theta T))^{-1} > \bar \xi  )\\
&=& \sum_{k=1}^m p_k \Pop((e^{rt} F(T,W(T)+\vartheta_k T))^{-1} > \bar \xi  )=\beta.
\end{eqnarray*}
Thus, $\bar \xi$ is the first variable which can be determined from this equation prior to computing $\lambda_1.$ 
\end{rem}

\noindent Although the solution to the optimal terminal wealth under partial information in \eqref{optTW2} takes the same form as in the full information case, the optimal solution is more complex. Note that $\xi_T$ depends on the entire evolution of  $\hat{\theta}(s)$ between $0$ and $T$, in other words, on the entire path of $S(t)$ or $Y(t)$. In the full information case (Bask-Shapiro solution), the optimal terminal wealth is exclusively a function of the terminal state price density (or the terminal stock price). Note that Theorem \ref{theo:VaR} is a \emph{special case} of Theorem \ref{theo:VaR2}. The proof of Theorem \ref{theo:VaR2} is postponed to the \emph{appendix}.\\

\noindent We are also able to compute an optimal investment strategy for some specific utility functions explicitly. For what follows we consider the utility  $u(x)=\frac{x^{1-\gamma}}{1-\gamma}$, with $\gamma >0$ and $\gamma \neq 1$ and its inverse marginal utility $I(x)=x^{-1/\gamma}.$ Let further 
\begin{equation}\label{eq:h1line}
    h(t,y):= e^{-rT} \big(h^1(t,y)+ h^2(t,y)+ h^3(t,y)\big),
\end{equation}  
with
\begin{eqnarray*}
h^1(t,y) &:=& \lambda_1^{-1/\gamma} e^{rT/\gamma} \int_{(F^{-1}(T,1/e^{rT}\underline{\xi}) -y)/\sqrt{T-t}}^\infty F(T,y+z\sqrt{T-t})^{1/\gamma}\varphi(z) dz,\\ \nonumber
h^2(t,y) &:=& L \int_{(F^{-1}(T,1/e^{rT}\bar{\xi}) -y)/\sqrt{T-t}}^{(F^{-1}(T,1/e^{rT}\underline{\xi}) -y)/\sqrt{T-t}}
\varphi(z) dz,\\ \nonumber
h^3(t,y) &:=& \lambda_1^{-1/\gamma} e^{rT/\gamma} \int_{-\infty}^{(F^{-1}(T,1/e^{rT}\bar{\xi}) -y)/\sqrt{T-t}} F(T,y+z\sqrt{T-t})^{1/\gamma}\varphi(z) dz,
\end{eqnarray*}
where $\varphi$ is the density of the standard normal distribution. Having defined this we can state an optimal investment strategy as follows:

\begin{thm}\label{theo:VaRoptstrategy}
Assume that the potential values of $\theta$ are all non-negative.
An optimal investment strategy of problem \eqref{eq:optproblem1}  which attains $X^*_{T}(\lambda_1,\xi_T)$ from Theorem \ref{theo:VaR}  is for general utility function given by 
\begin{align} \label{optPS2g}
\pi^*(t) = \frac{e^{rt}}{\sigma} \psi_t, \quad  t\in[0,T]
\end{align}
 where $\psi$ exists and is such that
\begin{equation}
    \frac{X_t^*}{B(t)}= x_0 + \int_0^t \psi_t dY_t.
\end{equation}
For example, in case of power utility we obtain:
\begin{align} \label{optPS2}
\pi^*(t) = \frac{e^{rt}}{\sigma} h_y(t,Y(t)), \quad  t\in[0,T]
\end{align}
with $h$ as in \eqref{eq:h1line}.
\end{thm}
\begin{proof}
Proof is given in Appendix \ref{sec:proof1}.
\end{proof}

\section{Optimal solution under partial information and robust VaR constraints} \label{sec:PI&VaR_robust} 
In this section, we consider first a more general problem than in the previous section. 
Namely we assume that the VaR constraint may be satisfied under a different prior belief, i.e. a different probability measure. 
Thinking of the motivating example in the introduction about a financial institution's asset allocation problem,  regulators might not be sure of the future asset evolution of each financial institution, and hence, may not have perfect confidence in the perceived probability measure for the optimizing financial institution's future asset
value. So suppose there is a second probability measure $\tilde \Pop$ equivalent to $\Pop$ where $\tilde \Pop(\theta=\vartheta_k) =: q_k>0,\;
k=1,\ldots ,m$. We denote the density by  $\tilde\eta_T:=\frac{d\tilde \Pop}{d\Pop}$ and $\eta_T :=  {\mathbb{E}}[\tilde\eta_T | \FC_T^S]$ is the density conditioned on the observable filtration. 
The Brownian motion $W$ is again a Brownian motion under $\tilde \Pop.$
Instead of problem \eqref{eq:optproblem1} we now consider
\begin{align} \label{eq:optproblem2}
\max_{X(T)} \, {\mathbb{E}}[U(X(T))] \quad \text{s.t.}\quad  \tilde\Pop(X(T) \ge L )\ge 1- \beta \quad \text{and}\quad {\mathbb{E}} [\xi_T X(T)]\le x_0.
\end{align}
where $\Pop$ is replaced by $\tilde \Pop$ in the VaR constraint and where we maximize over $\FC^S_T$ measurable variables. If $\tilde \Pop=\Pop$, we are back in the setting of the previous section. Extending the proof of \cite{basak2001value,sass2010} we can see that the optimal terminal wealth is now given by (the proof can be found in Appendix \ref{sec:proof2}).

\begin{thm}\label{theo:VaR2}
The optimal terminal wealth of problem \eqref{eq:optproblem2} is given by
\begin{align} \label{optTW3}
X^*_{T}(\lambda_1,\lambda_2,\xi_T,\eta_T) =  I(\lambda_1 \xi_T){\bf 1}_{\xi_T<\underline{\xi}} +L {\bf 1}_{\underline{\xi} \leqslant \xi_{T}< \bar{\xi}} + I(\lambda_1  \xi_T) {\bf 1}_{\bar{\xi}\le \xi_T},
\end{align}
where $\underline{\xi}  = U'(L)/\lambda_1$ and  $\bar{\xi} := \inf\{ z>U'(L) : f(\lambda_1z)=\lambda_2\eta_T+U(L)\} $ with $f(z):= U(I(z))-zI(z)+Lz$.  Further $\lambda_1,\lambda_2$ are the solution of ${\mathbb{E}} [\xi_T X_T^*]= x_0$ and $\tilde\Pop(X_{T}^* \ge L )= 1- \beta$ where we assume that the parameters are such that the solution exists. The corresponding optimal investment strategy can be  characterized as in Theorem \ref{theo:VaRoptstrategy}. 
\end{thm}



\noindent Please note that, in contrast to Theorem \ref{theo:VaR}, the upper boundary of the intermediate region exhibits stochastic behavior.
We can now explore various scenarios where the constraints are approached from different perspectives. Specifically, given the nature of the optimal solution, we can proceed to examine the problem within the framework of \emph{robust} VaR constraints. In the context of robust VaR constraints, we understand the condition as follows: $\min_{i\in I}\tilde\Pop_i(X(T) \ge L )\ge 1- \beta,$ where $ I:=\{1,\ldots, m\}.$ Consequently, instead of relying on a single probability measure, the regulator considers a set of possible measures and mandates that the constraint holds true for all measures within this set. We address this scenario in Corollary 4.1. However, before presenting it, we require the following definition.

\begin{df} Two random variables $\theta, \theta'$ with values in $\R$ satisfy $\theta \le_{FSD} \theta'$ if and only if 
$$\Eop [f(\theta)]\le \Eop[ f(\theta')]$$
for all non-decreasing $f:\R\to\R$ for which the expectations exist.
\end{df}

\noindent For what follows we assume that the potential values of $\theta$ are all non-negative and  w.l.o.g.\ ordered as $0\le \vartheta_1\le \ldots \le \vartheta_m.$ Then we obtain (for a proof see  Appendix \ref{sec:proof3}):

\begin{cor}\label{cor:1}
Let $\theta_1 \ge_{FSD} \ldots \ge_{FSD} \theta_m \ge_{FSD} \theta $ with corresponding probability measures $\tilde \Pop_i, i\in I:=\{1,\ldots,m\}$ and $\Pop.$
Suppose we replace the VaR constraint in  \eqref{eq:optproblem2} by 
    \begin{align}
     \min_{i\in I}\tilde\Pop_i(X(T) \ge  L )\ge 1- \beta, \mbox{ with } i\in I.
\end{align}
Then the optimal terminal wealth is again given by Theorem \ref{theo:VaR}. The same is true when we replace the VaR constraint by 
 \begin{align} \label{eq:smooth}
     \sum_{i\in I}\alpha_i\tilde\Pop_i(X(T) \ge L )\ge 1- \beta.
\end{align}
where $0\le \alpha_i$ and $\sum_{i\in I} \alpha_i=1.$

\end{cor}

\section{Numerical analyses} \label{sec_num_ana}

In this section, we carry out numerical analyses aimed at assessing the influence of VaR constraints on the optimal terminal wealth, building upon the insights provided by Theorem \ref{theo:VaR}.
In the preceding sections, we have establish the optimal terminal wealth under broad utility function formulations. However, for our numerical examination, we adopt the assumption that the optimizing institution employs a power utility function to express her preferences, i.e., $u(x)=\frac{x^{1-\gamma}}{1-\gamma}$, with $\gamma >0$ and $\gamma \neq 1$. The inverse marginal utility is given by $I(x)=x^{-1/\gamma}$. We fix the parameters as follows:  
\begin{align} \label{para_choice}
\nonumber r =& 0.03, \,T = 10, \,x_0 = 100, \, 
\gamma =2\; \text{or} \;3, \,\beta = 0.05, \, L = 120, \\
\vartheta_1 =& 0.15,\, \vartheta_2 = 0.25, \, \vartheta_3 = 0.35, \,  p_1 = 1/3, \, p_2 = 1/3, \,p_3 =1/3.
\end{align}
The chosen $\vartheta_1$, $\vartheta_2$ and $\vartheta_3$ are in line with a reasonable market price of risk (Sharpe ratio).   
The Sharpe ratio describes the performance of a risky project in relation to a risk-free investment. Typically, a Sharpe ratio above 0.5 in the long run indicates great investment performance and is difficult to achieve, while a ratio between $0.1$ and $0.3$ is frequently considered reasonable and can be achieved more easily (see, for example, \cite{sharpe1998sharpe}). The average Sharpe ratio is chosen to be 0.25 in our benchmark analysis. \\

\subsection{The impact of VaR on the terminal wealth}
\begin{figure}[t]	
	\centering
	\includegraphics[width=\textwidth]{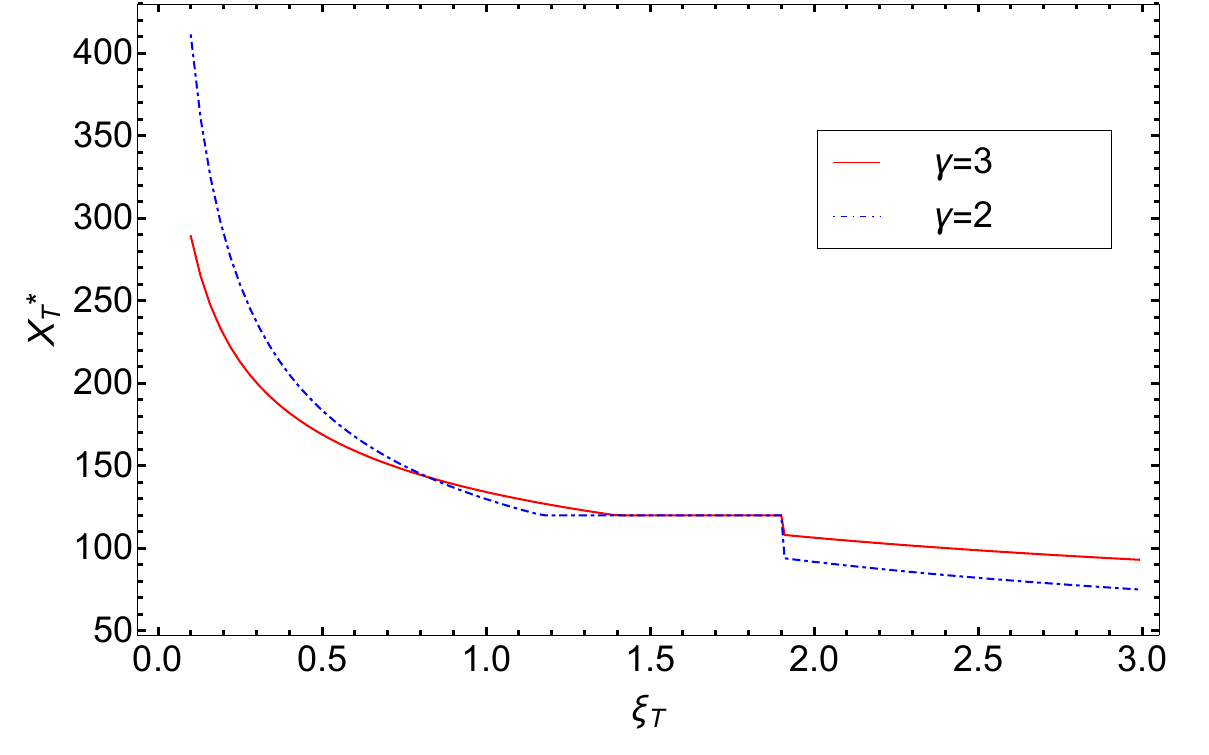}
	\caption{Optimal terminal wealth $X^*_T$ as a function of the state price density $\xi_T$ using parameters given in Section \ref{para_choice}.} \label{fig_opXT_partVaR}
\end{figure}

First of all, Figure \ref{fig_opXT_partVaR} demonstrates how the optimal final wealth $X^*_T$ changes with the state price density $\xi_T$, as typically done in the literature on utility maximization problems. A low state price density implies a rather rosy financial market, while a high state price density stands for an unfavorable performance of the market. 
The optimal final wealth $X^*_T$ is a decreasing function of $\xi_T$. As the market performance varies from a very favorable state (a very low $\xi_T$)
to a very unfavorable state (a very high $\xi_T$), the resulting
optimal final wealth moves from a very high to a rather low level.  With a VaR-type
constraint, the evolution of the optimal final wealth $X_T^*$
depends on the regulatory threshold $L$. The
resulting optimal wealth with a VaR-type constraint can be
decomposed into three parts, depending on two critical values of the state price densities, cf. Theorem \ref{theo:VaR}. 
In the good and bad performance areas, the
optimal wealth moves very similarly as typically in the Merton case. In the intermediate state, here in the interval $[1.42,1.91]$ for $\gamma=3$,  the investor hedges against losses in order to meet the regulatory threshold. This is due to the fact that
the VaR-type investor shall use these value reductions as
compensations to meet the regulatory threshold $L$ in the immediate region. In addition, we show the optimal solutions for two different levels of risk aversion $\gamma=2$ and $\gamma=3$. The higher critical state price density $\bar\xi$ does not depend on the risk aversion and is identical for both cases (it remains at the level of 1.91, while the lower critical value becomes 1.18 for $\gamma=2$.).  Comparing $\gamma=3$ to $\gamma=2$, a more risk-averse economic institution is interested in ensuring a higher payoff in really bad economic states. Consequently, in the good economic states, the more risk-averse institution obtains a lower wealth level, and the region in which the guaranteed payment is ensured becomes narrower.\\


\begin{figure}[t]
	\centering
	\includegraphics[width=\textwidth]{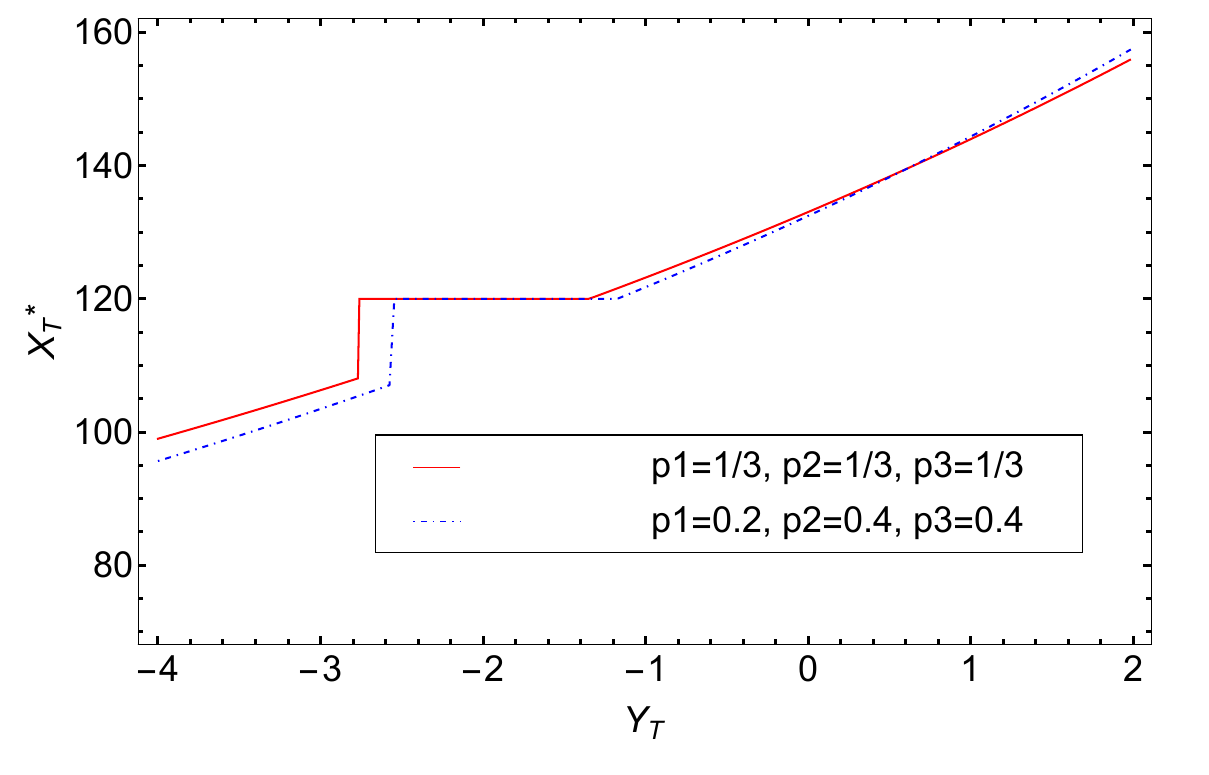}
	\caption{Optimal terminal wealth $X^*_T$ as a function of the state price density $Y_T$ using the parameters given in Section \ref{para_choice}. }
	\label{opXT2}
\end{figure}

\subsection{The impact of the prior for the VaR constraint}

To examine the influence of different prior probability distributions, we present the results in Figure \ref{opXT2}. This figure illustrates the optimal terminal wealth under a VaR constraint, as established by Theorem \ref{theo:VaR}. We consider two distinct sets of priors: $\{p_1=1/3, p_2=1/3, p_3=1/3\}$ and $\{p_1=0.2, p_2=0.4, p_3=0.4\}$. 
The market's risk prices, denoted as $\vartheta_1$, $\vartheta_2$, and $\vartheta_3$, remain consistent with those discussed in the previous subsection, where $\gamma=3$. To provide a precise analysis of the impact of these priors, we have chosen to plot the graph as a function of $Y_T$ instead of $\xi_T$, as the state price $\xi_T$ itself is influenced by the chosen priors. This approach allows us to isolate and thoroughly assess the effect of the prior probability distributions on the optimal terminal wealth. We can draw the following noteworthy observations: Initially, it's evident that as $Y_T$ rises, reflecting a more favorable economic state, the optimal terminal wealth $X_T^*$ also experiences an increase. This positive correlation aligns with economic intuition.
Secondly, it becomes apparent that the chosen priors play a significant role in determining the critical values $\bar\xi, \underline\xi$. These critical values determine specific regions within which the terminal wealth increases. When assigning greater weight to higher market prices of risks (the case with $\{p_1=0.2, p_2=0.4, p_3=0.4\}$), this  results in a shift of the guaranteed payment region to more favorable economic states.   


\section{Conclusion} \label{sec:con}

This paper analyzes the combined effects of partial information and (robust) VaR regulation on utility maximization problems. The inclusion of partial information does not change the shape of the optimal final wealth compared to the case of complete information, but it makes the solution more complex. The state price density process is much more complex in this case since the final state price density depends on the entire realization of the process up to the final time point. \\

\noindent If the regulator uses a different prior (or a worst-case prior) to set the regulatory constraint, the subjective belief of the regulator does not change the optimal structure, but affects the range in which the optimal terminal wealth matches the regulatory capital requirement.

\section{Appendix} \label{sec:app}

\subsection{Proof of Theorem \ref{theo:VaRoptstrategy}}\label{sec:proof1}
We know that 
\begin{align} 
X^*_{T}(\lambda_1,\xi_T) =  I(\lambda_1 \xi_T){\bf 1}_{\xi_T<\underline{\xi}} +L {\bf 1}_{\underline{\xi} \leqslant \xi_{T}< \bar{\xi}} + I(\lambda_1  \xi_T) {\bf 1}_{\bar{\xi}\le \xi_T},
\end{align} 
is the optimal terminal wealth for the specific parameters $\lambda_1, \underline{\xi}, \bar\xi.$ From this, we will first compute $X^*_t.$ We know that the discounted wealth process is a $\Qop$-martingale, i.e.
$$ \frac{X_t^*}{B(t)}=\Eop_\Qop\left[\frac{X_T^*(\lambda_1,\xi_T)}{B(T)} \Big| \mathcal{F}_t^S\right].$$

\noindent For general utility function, we can use a martingale representation theorem to obtain the existence of an $\mathcal{F}^S$-adapted process $\psi$ s.t.
\begin{equation}\label{eq:martrep}
    \frac{X_t^*}{B(t)}= x_0 + \int_0^t \psi_t dY_t 
\end{equation}
For the power utility we can compute $\Eop_\Qop\left[{X_T^*(\lambda_1,\xi_T)}| \mathcal{F}_t^S\right]$  explicitly as follows: First note that $y\mapsto F(t,y)$ is increasing and continuous and thus, admits an inverse function $F^{-1}.$
The expression $\Eop_\Qop\left[{X_T^*(\lambda_1,\xi_T)}| \mathcal{F}_t^S\right]$ can be split into three parts:

Part 1: In the expression below $Z$ is a standard normal random variable, independent from $Y_t.$
\begin{eqnarray*}
&& \Eop_\Qop\left[ I(\lambda_1 \xi_T){\bf 1}_{\xi_T<\underline{\xi}}    \big| \mathcal{F}_t^S\right] = 
\lambda_1^{-1/\gamma} e^{rT/\gamma} \Eop_\Qop\left[ F(T,Y(T))^{1/\gamma} {\bf 1}_{F(T,Y(T))>1/(e^{rT}\underline{\xi})}    \big| \mathcal{F}_t^S\right] \\
&=& \lambda_1^{-1/\gamma} e^{rT/\gamma} \Eop_\Qop\left[ F(T,Y(t)+\sqrt{T-t}Z)^{1/\gamma} {\bf 1}_{F(T,Y(t)+\sqrt{T-t}Z)>1/(e^{rT}\underline{\xi})}    \big| \mathcal{F}_t^S\right] \\
&=& \lambda_1^{-1/\gamma} e^{rT/\gamma} \Eop_\Qop\left[ F(T,Y(t)+\sqrt{T-t}Z)^{1/\gamma} {\bf 1}_{Z > \big(F^{-1}(T, 1/(e^{rT}\underline{\xi})-Y(t)\big)/  \sqrt{T-t} }\big| \mathcal{F}_t^S\right] \\
&=& \lambda_1^{-1/\gamma} e^{rT/\gamma} \int_{(F^{-1}(T,1/e^{rT}\underline{\xi}) -y)/\sqrt{T-t}}^\infty F(T,y+z\sqrt{T-t})^{1/\gamma}\varphi(z) dz = h_1(t,Y(t)).
\end{eqnarray*}
We proceed in the same way for the other two parts:

Part 2:
\begin{eqnarray*}
&& \Eop_\Qop\left[L {\bf 1}_{\underline{\xi} \leqslant \xi_{T}< \bar{\xi}}    \big| \mathcal{F}_t^S\right] = 
L \int_{(F^{-1}(T,1/e^{rT}\bar{\xi}) -y)/\sqrt{T-t}}^{(F^{-1}(T,1/e^{rT}\underline{\xi}) -y)/\sqrt{T-t}}
\varphi(z) dz= h_2(t,Y(t)).
\end{eqnarray*}

Part 3:
\begin{eqnarray*}
&& \Eop_\Qop\left[ I(\lambda_1  \xi_T) {\bf 1}_{\bar{\xi}\le \xi_T}  \big| \mathcal{F}_t^S\right] =\lambda_1^{-1/\gamma} e^{rT/\gamma} \Eop_\Qop\left[ F(T,Y(T))^{1/\gamma} {\bf 1}_{F(T,Y(T))\le 1/(e^{rT}\bar{\xi})}    \big| \mathcal{F}_t^S\right] \\
&=& 
 \lambda_1^{-1/\gamma} e^{rT/\gamma} \int_{-\infty}^{(F^{-1}(T,1/e^{rT}\bar{\xi}) -y)/\sqrt{T-t}} F(T,y+z\sqrt{T-t})^{1/\gamma}\varphi(z) dz = h_3(t,Y(t)).
\end{eqnarray*}
In total, we obtain for the discounted optimal wealth process
\begin{equation}\label{eq:optimal_wealth_process}
    \hat X_t^* := \frac{X_t^*}{B(t)} = e^{-rT} \Big( h^1(t,Y(t))+h^2(t,Y(t))+h^3(t,Y(t))\Big)=h(t,Y(t)).
\end{equation}
On the other hand the discounted wealth process $\hat X$ of an arbitrary strategy satisfies
$$ d\hat X(t) = \frac{\pi(t)}{B(t)}\sigma dY(t).$$
Equating this with \eqref{eq:martrep} gives for general utility $\pi^*(t) = \psi_t B(t)/\sigma.$ For the power utility we obtain by applying the It\^{o} formula to \eqref{eq:optimal_wealth_process}  since $\hat X^*$ is a $\Qop$-martingale
$$d\hat X^*_t = 0 dt + h_y(t,Y(t))dY(t) = h_y(t,Y(t))dY(t). $$ 
Equating the expressions and solving for $\pi^*(t)$ yields the statement for power utility.

\subsection{Proof of Theorem \ref{theo:VaR2}} \label{sec:proof2}
To simplify the notation, we write $X_T^*$ instead of $X_T^*(\lambda_1,\lambda_2,\xi_T,\eta_T).$
Since 
$$\tilde\Pop(X_T^* \ge L) = \tilde \Eop[1_{[X_T^*\ge L]}]= \Eop[\tilde \eta_T1_{[X_T^*\ge L]} ] = \Eop[ \Eop[\tilde\eta_T |\FC_T^S] 1_{[X_T^* \ge L]}]=\Eop[\eta_T 1_{[X_T^* \ge L]}] $$  the Lagrange function of problem \eqref{eq:optproblem2} is given by
$$ L(X,\lambda_1,\lambda_2) := U(X)-\lambda_1\xi_T X+\lambda_2 \eta_T 1_{[X\ge L]}.$$
 Note that $I(\lambda_1 \xi_T)$ is the maximum point of $x\mapsto U(x)-\lambda_1 \xi_T x$ and thus if it satisfies the probability constraint is the optimal solution. 
 
\noindent  For fixed $z_1,z_2>0$ we consider the problem to maximize (point-wise)
 $$ X \mapsto U(X)-z_1 X+z_2 1_{[X\ge L]}.$$
The maximum point of the Lagrange function can either be at $I(z_1)$ or at $L$. We show that it is given by
$$ X^* =  I(z_1){\bf 1}_{z_1<U'(L)} +L {\bf 1}_{U'(L) \leqslant z_1< \bar\xi} + I(z_1) {\bf 1}_{\bar\xi\le z_1}$$
where $\bar\xi := \inf\{ z>U'(L) : f(z)=z_2+U(L)\}.$ 
Before we start note that
the function $f(x) = U(I( x)) - x I( x)+ Lx$ is strictly increasing on $x>U'(L)$ which can be seen by inspecting its derivative and $f(U'(L))=U(L).$ Since $\inf\emptyset := +\infty,$  we may have $\bar \xi=+\infty$ in which case the last part of $X^*$ can be skipped. 
Now consider the following cases:
\begin{itemize}
    \item[1.]  $z_1\le U'(L)$.
    
    Hence $X^*=I(z_1) \ge L $ and thus $I(z_1)$  is the maximum point.
    \item[2.] $U'(L)< z_1 \le \bar \xi.$
    
     Hence $I(z_1) < L.$ Also note that $U'(L)< \bar \xi. $ In this case
     \begin{eqnarray*}
     U(L) -z_1 L + z_2 &>&   U(I(z_1)) -z_1 I(z_1) \Leftrightarrow f(z_1) < U(L)+z_2.
     \end{eqnarray*}
     Hence $L$ is the maximum point.
     \item[3.] $z_1 > \bar \xi.$
     
     In this case the inequality in the previous case is reversed and $I(z_1)$  is the maximum point.
\end{itemize}
Finally let $X_T^*$ be as stated in Theorem \ref{theo:VaR2} and let $X_T$ be any other admissible terminal wealth. Then
  \begin{eqnarray*}
   && \Eop U(X_T^*)- \Eop U(X_T) =  \Eop U(X_T^*)- \Eop U(X_T) -\lambda_1 x_0 +\lambda_1 x_0 +\lambda_2 (1-\beta) - \lambda_2 (1-\beta)\\
    &\ge &  \Eop U(X_T^*)- \Eop U(X_T) -\Eop[\lambda_1 \xi_T X_T^*]+ \Eop[\lambda_1 \xi_T X_T]+\Eop[\lambda_2\eta_T 1_{[X_T^* \ge L]}]-\Eop[\lambda_2\eta_T 1_{[X_T \ge L]}]\ge 0
     \end{eqnarray*}
due to our previous discussion. 
Thus, $X_T^*$ has the maximal expected utility. Since it is also admissible, the statement is shown.
     
Note that for $\eta_T\equiv 1$, we can define $$\lambda_2= U(I( \lambda_1\bar\xi)) -\lambda_1\bar \xi I( \lambda_1\bar\xi) + \lambda_1\bar \xi L -U(L) $$  and $\bar\xi$ such that $\Pop(\xi_T>\bar \xi)=\beta$ which implies that $\bar \xi$ is deterministic here.

\subsection{Proof of Corollary \ref{cor:1}}\label{sec:proof3}
We will show that the optimal solution of Theorem \ref{theo:VaR} also satisfies the constraints under the different priors.

Since $y\mapsto F(t,y)$ is increasing in this case (note that all $ \vartheta_i$ are non-negative) we obtain due to our ordering of the values that 
$$\Pop\Big(\frac{e^{-rT}}{F(T,W_T+\vartheta_1T)} \ge \bar \xi\Big)\ge \ldots\ge \Pop\Big(\frac{e^{-rT}}{F(T,W_T+\vartheta_mT)} \ge \bar \xi\Big). $$
Hence $$f(k) := \Pop\Big(\frac{e^{-rT}}{F(T,W_T+\vartheta_kT)} \ge \bar \xi\Big)$$
is decreasing in $k$ and thus when $\theta\le_{FSD}\theta'$ we have $$\tilde\Pop(\xi_T \ge \bar \xi) = \sum_{k=1}^m q_k f(k) \ge \sum_{k=1}^m q'_k f(k) =\tilde\Pop'(\xi_T \ge \bar \xi).$$
Thus, if $\bar \xi$ satisfies $\Pop(\xi_T \ge \bar \xi)\le \beta$ in Theorem \ref{theo:VaR},  the value $\bar\xi$ also satisfies the constraints $\tilde\Pop_i(\xi_T \ge \bar \xi)\le \beta$ and is thus optimal under these different beliefs. 

\singlespacing
\bibliographystyle{apalike}
\bibliography{bibfile}

\end{document}